\newcommand{\isep}{\mathrel{{.}\,{.}}\nobreak}
\def\BibTeX{{\rm B\kern-.05em{\sc i\kern-.025em b}\kern-.08em
    T\kern-.1667em\lower.7ex\hbox{E}\kern-.125emX}}
\definecolor{mygreen}{HTML}{1B5E20}
\definecolor{myred}{HTML}{d62728}
\newcommand{\Name}[0]{FastSV} 
\newcommand{\todo}[1]{{\small\color{red}[#1]}}
\algrenewcommand{\algorithmiccomment}[1]{{$\triangleright$ \color{blue}{#1}}}
\crefname{lemma}{Lemma}{Lemmas}
\crefname{theorem}{Theorem}{Theorems}
\begin{document}

\title{\LARGE \Name{}: A Distributed-Memory Connected Component Algorithm with Fast Convergence}
\author{anonymous author(s)}
\author{Yongzhe Zhang\thanks{SOKENDAI, Japan}
\and Ariful Azad\thanks{Indiana University Bloomington, USA}
\and Zhenjiang Hu\thanks{Peking University, China}}

\date{}

\maketitle



\begin{abstract}
This paper presents a new distributed-memory algorithm called \Name{} for finding connected components in an undirected graph.
Our algorithm simplifies the classic Shiloach-Vishkin algorithm and employs several novel and efficient hooking strategies for faster convergence.
We map different steps of \Name{} to linear algebraic operations and implement them with the help of scalable graph libraries.
\Name{} uses sparse operations to avoid redundant work and optimized MPI communication to avoid bottlenecks. 
The resultant algorithm shows high-performance and scalability as it can find the connected components of a hyperlink graph with over 134B edges in $30$ seconds using 262K cores on a Cray XC40 supercomputer.
\Name{}  outperforms the state-of-the-art algorithm by an average speedup of $2.21\times$ (max $4.27\times$) on a variety of real-world graphs.

\end{abstract}

\section{Introduction}
\label{sec:introduction}


This paper presents a distributed-memory parallel algorithm for finding connected components (CC) in an undirected graph $G(V, E)$ where $V$ and $E$ are the set of vertices and edges, respectively.
A connected component is a subgraph of $G$ in which every pair of vertices are connected by paths and no vertex in the subgraph is connected to any other vertex outside of the subgraph. 
Finding connected components has numerous applications in bioinformatics~\cite{van2000graph}, computer vision~\cite{yang1989improved}, and scientific computing~\cite{pothen1990computing}.

Sequentially, connected components of a graph with $n$ vertices and $m$ edges can be easily found by breadth-first search (BFS) or depth-first search in $O(m+n)$ time.
While this approach performs linear work, the depth is proportional to the sum of the diameters of the connected components.
Therefore, BFS-based parallel algorithms 
are not suitable for high-diameter graphs or graphs with millions of connected components.
Connectivity algorithms based on the ``tree hooking" scheme work by arranging the vertices into disjoint trees such that at the end of the algorithm, all vertices in a tree represent a connected component.
Shiloach and Vishkin~\cite{sv} used this idea to develop a highly-parallel PRAM (parallel random access machine) algorithm that runs in $O(\log n)$ time using $O(n+m)$ processors.
Their algorithm is not work efficient as it performs $O(m\log n)$ work, but the availability of $O(m)$ parallel work made it an attractive choice for large-scale distributed-memory systems.  
Therefore, the  Shiloach-Vishkin (SV) algorithm and its variants are frequently used in scalable distributed-memory CC algorithms such as LACC~\cite{lacc}, ParConnect~\cite{parconnect}, and Hash-Min~\cite{hashmin}.

To the best of our knowledge, LACC~\cite{lacc} is the most scalable published CC algorithm that scales to 262K cores when clustering graphs with more than 50B edges.
LACC is based on the Awerbuch-Shiloach (AS) algorithm, which is a simplification of the SV algorithm.
The AS algorithm consists of four steps: (a) finding stars (trees of height 1), (b) hooking stars conditionally onto other trees, (c) hooking stars unconditionally onto other trees, (d) shortcutting to reduce the height of trees. 
LACC mapped these operations to parallel linear-algebraic operations such as those defined in the GraphBLAS standard~\cite{graphblas} and implemented them in the CombBLAS~\cite{combblas} library for scalability and performance.
We observed that  LACC's requirements of star hooking  and unconditional hooking can be safely removed to design a simplified SV algorithm with just two steps: (a) hooking trees conditionally onto other trees and (b) shortcutting. 
After mapping these two operation to linear algebra and performing other simplifications, we developed a distributed-memory SV algorithm that is both simpler and faster than LACC. 
Since, each of the four operations in LACC takes about 25\% of the total runtime, each iteration of our SV is usually more than $2\times$ faster than each iteration of LACC when run on the same number of processors. 
However, the simplified SV requires more iterations than LACC because of the removal of unconditional hooking. 
To alleviate this problem, we developed several novel hooking strategies for faster convergence, hence the new algorithm is named as \Name{}.

The simplicity of \Name{} along with its fast convergence schemes makes it suitable for distributed-memory platforms. 
We map different steps of \Name{}  to linear-algebraic operations and implemented the algorithm using the CombBLAS library.
We choose CombBLAS due to its high scalability and the fact that the state-of-the-art connected component algorithm LACC and ParConnect also rely on CombBLAS.
We further employ several optimization techniques for eliminating  communication bottlenecks.
The resultant algorithm is highly parallel as it scales up to $262,144$ cores of a Cray XC40 supercomputer and can find CCs from graphs with billions of vertices and hundreds of billions of edges in just 30 seconds.
\Name{} advances the state-of-the-art in parallel CC algorithm as it is on average $2.21\times$ faster than the previous fastest algorithm LACC.
Overall, we made the following technical contributions in this paper:
\begin{itemize}[leftmargin=*]
\item We develop a simple and efficient algorithm \Name{} for finding connected components in distributed memory. \Name{} uses novel hooking strategies for fast convergence. 
\item We present \Name{} using a handful of GraphBLAS operations and implement the algorithm in CombBLAS for distributed-memory platforms and in SuiteSparse:GraphBLAS for shared-memory platforms. We dynamically use sparse operations to avoid redundant work and optimize MPI communication to avoid bottlenecks. 

\item Both shared- and distributed-memory implementations of \Name{} are significantly faster than the state-of-the-art algorithm LACC. The distributed-memory implementation of \Name{} can find CCs in a hyperlink graph with 3.27B vertices and 124.9B edges in just $30$ seconds using 262K cores of a XC40 supercomputer. 
\end{itemize}

\section{Background}

\subsection{Notations.}

This paper assumes the connected component algorithm to be performed on an undirected graph $G=(V,E)$ with $n$ vertices and $m$ edges. 
For each vertex $v\in V$, we use $N(v)$ to denote $v$'s neighbors, the set of vertices adjacent to $v$.
We use \emph{pointer graph} to refer to an auxiliary directed graph $G_p=(V,E_p)$ for $G$, where for every vertex $v\in V$ there is exactly one directed edge $(v,v_1)\in E_p$ and $v_1\leq v$.
If we ignore the self-loops $(v,v)\in E_p$, $G_p$ defines a forest of directed rooted trees where every vertex can follow the directed edges to reach the root vertex.
In $G_p$, a tree is called a \emph{star} if every vertex in the tree points to a root vertex (a root points to itself).

\subsection{GraphBLAS.}

Expressing graph algorithms in the language of linear algebra is appealing.
By using a small set of matrix and vector (linear algebra) operations, many scalable graph algorithms can be described clearly, the parallelism is hidden for the programmers, and the high performance can be achieved by performance experts implementing those primitives on various architectures.
Several independent systems have emerged that use matrix algebra to perform graph computations~\cite{combblas,gpi,graphmat}.
Recently, GraphBLAS~\cite{graphblas} defines a standard set of linear-algebraic operations (and C APIs~\cite{graphblas-C}) for implementing graph algorithms.
In this paper, we will use the functions from the GraphBLAS API to describe our algorithms due to its conciseness.
Our distributed implementation is based on CombBLAS~\cite{combblas}.

\subsection{The original SV algorithm.}
The SV algorithm stores the connectivity information in a forest of rooted trees, where each vertex $v$ maintains a field $f[v]$ through which it points to either itself or another vertex in the same connected component.
All vertices in a tree belong to the same component, and at termination of the algorithm, all vertices in a connected component belong to the same tree. 
Each tree has a designated root (a vertex having a self-loop) that serves as the representative vertex for the corresponding component.
This data structure is called a pointer graph, which changes dynamically during the course of the algorithm.

The algorithm begins with $n$ single-vertex trees and iteratively merges trees to find connected components. 
Each iteration of the original SV algorithm performs a sequence of four operations: (a) conditional hooking,  (b) shortcutting, (c) unconditional hooking and (d) another shortcutting.
Here, hooking is a process where the root of a tree becomes a child of a vertex from another tree.
Conditional hooking of a root $u$ is allowed only when $u$'s id is larger than the vertex which $u$ is hooked into. 
Unconditional hooking can hook any trees that remained unchanged in the preceding conditional hooking.
The shortcutting step reduces the height of trees by replacing a vertex's parent by its grandparent.
With these four steps the SV algorithm is guaranteed to finish in $O(\log n)$ iterations, where each iteration performs $O(m)$ parallel work. 

The original Shiloach-Vishkin paper mentioned that the last shortcutting is for a simpler proof of their algorithm. 
Hence, it can be removed without sacrificing correctness or convergence speed.
If we remove unconditional hooking, the algorithm is still correct, but it may need $O(n)$ iterations in the worst case.
Nevertheless, practical parallel algorithms often remove the unconditional hooking~\cite{palgol,cc-ipdps18} because it needs to keep track of unchanged trees (also known as stagnant trees), which is expensive, especially in distributed memory.
We follow the same route and use a simplified SV algorithm discussed next. 


\begin{algorithm}[t]
\small
\caption{The SV algorithm. \textbf{Input:} An undirected graph $G(V,E)$. \textbf{Output:} The parent vector $f$.}
\label{algo:algo-one}
\begin{algorithmic}[1]
\Procedure {SV}{$V,E$}
\For {every vertex $u \in V$}
	\State $f[u], f_{\mathit{next}}[u] \leftarrow u$
\EndFor
\Repeat
	\State\Comment {Step 1: Tree hooking}
	\For {every $(u,v) \in E$} \textbf{in parallel}
		\If {$f[u]=f[f[u]]$ \textbf{and} $f[v] < f[u]$}
			\State $f_{\mathit{next}}[f[u]] \leftarrow f[v]$
		\EndIf
	\EndFor
	\State $f\leftarrow f_{\mathit{next}}$
	\State\Comment {Step 2: Shortcutting}
	\For {every $u \in V$} \textbf{in parallel}
		\If {$f[u]\neq f[f[u]]$}
			\State $f_{\mathit{next}}[u] \leftarrow f[f[u]]$
		\EndIf
	\EndFor
	\State $f\leftarrow f_{\mathit{next}}$
\Until{$f$ remains unchanged}
\EndProcedure
\end{algorithmic}
\end{algorithm}

\subsection{A simplified SV algorithm }
\label{sec:algo-one}
\autoref{algo:algo-one} describes the simplified SV algorithm, which is the basis of our parallel algorithm. 
Initially, the parent $f[u]$ of a vertex $u$ is set to itself to denote $n$ single-vertex trees. 
We additionally maintain a copy $f_{\mathit{next}}$  of the parent vector so that the parallel algorithm reads from $f$ and writes to $f_{\mathit{next}}$.
Given a fixed ordering of vertices, each execution of \autoref{algo:algo-one} generates exactly the same pointer graph after the $i$th iteration because of using separate vectors for reading and writing. 
Hence, the convergence pattern of this parallel algorithm is completely deterministic, making it suitable for massively-parallel distributed systems.  
By contrast, concurrent reading from and writing to a single vector $f$ still deliver the correct connected components, but the structures of intermediate pointer graphs are not deterministic.

In each iteration, the algorithm performs tree hooking and shortcutting operations in order:
\begin{itemize}
\item \textbf{Tree hooking (line 6--8):} for every edge $(u,v)$, if $u$'s parent $f[u]$ is a root and $f[v]<f[u]$, then make $f[u]$ point to $f[v]$. As mentioned before, the updated parents are stored in a separate vector $f_{\mathit{next}}$ so the updated parents are not used in the current iteration. 
\item \textbf{Shortcutting (line 11--13):} if a vertex $u$ does not point to a root vertex, make $u$ point to its grandparent $f[f[u]]$.
\end{itemize}

The algorithm terminates when the parent vector remains unchanged in the latest iteration.
At termination, every tree becomes a star, and vertices in a star constitute a connected component. 
The correctness of this algorithm is discussed in previous work~\cite{greiner1994comparison}.
However, as mentioned before, without the unconditional hooking used in the original SV algorithm, we can no longer guarantee that \autoref{algo:algo-one} converges in $O(\log n)$ iterations. 
We will show in Section~\ref{sec:evaluation} that \autoref{algo:algo-one} indeed converges slowly, but does not require the worst case bound $O(n)$ iterations for the practical graphs we considered. 
Nevertheless, the extra iterations needed by \autoref{algo:algo-one} increase the runtime of parallel SV algorithms. 
To alleviate this problem, we develop several novel hooking schemes, ensuring that the improved algorithm \Name{}  is as simple as \autoref{algo:algo-one}, but the former converges faster than the latter.

\section{The \Name{} algorithm}
\label{sec:algo-two}

In this section, we introduce four important optimizations for the simplified SV algorithm, obtaining \Name{} with faster convergence.

\subsection{Hooking to grandparent.}

\begin{figure}[t]
\centering
\includegraphics[width=0.27\textwidth]{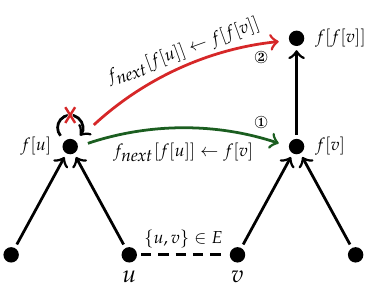}
\vspace{-10pt}
\caption{Two different ways of performing the tree hooking. (1) the original algorithm that hooks $u$'s parent $f[u]$ to $v$'s parent $f[v]$, (2) hook $u$'s parent $f[u]$ to $v$'s grandparent $f[f[v]]$. Both strategies are correct and the latter one improves the convergence.}
\vspace{-10pt}
\label{fig:hooking}
\end{figure}

In the original algorithm, the tree hooking is represented by the assignment $f_{\mathit{next}}[f[u]]\leftarrow f[v]$ (line 8 in \autoref{algo:algo-one}) requiring $f[u]$ to be a root vertex, $(u,v)\in E$ and $f[v]<f[u]$.
It is not hard to see, if we perform the tree hooking using $v$'s grandparent $f[f[v]]$, saying $f_{\mathit{next}}[f[u]]\leftarrow f[f[v]]$, the algorithm will still produce the correct answer.
To show this, we visualize both operations in \autoref{fig:hooking}.

Suppose $(u,v)$ is an edge in the input graph and $f[v]<f[u]$.
The original hooking operation is represented by the green arrow in the figure, which hooks $f[u]$ to $v$'s parent $f[v]$.
Then, our new strategy simply changes $f[v]$ to $v$'s grandparent $f[f[v]]$, resulting the red arrow from $f[u]$ to $f[f[v]]$.
It is not hard to see, as long as we choose a value like $f[f[v]]$ such that it is in the same tree of $v$, we can easily prove the correctness of the algorithm.
One can also expect that any value like $f^{k}[v]$ ($v$'s $k$-th level ancestor) will also work.

Intuitively, choosing a higher ancestor of $v$ in the tree hooking will likely create shorter trees, leading to faster convergence (all trees are stars at termination).
However, finding higher ancestors may incur additional computational cost.  
Here, we choose grandparents $f[f[v]]$ because they are needed in the shortcutting operation anyway; hence, using grandparents does not incur additional cost in the hooking operation.

\begin{figure}
\centering
\includegraphics[width=0.48\textwidth]{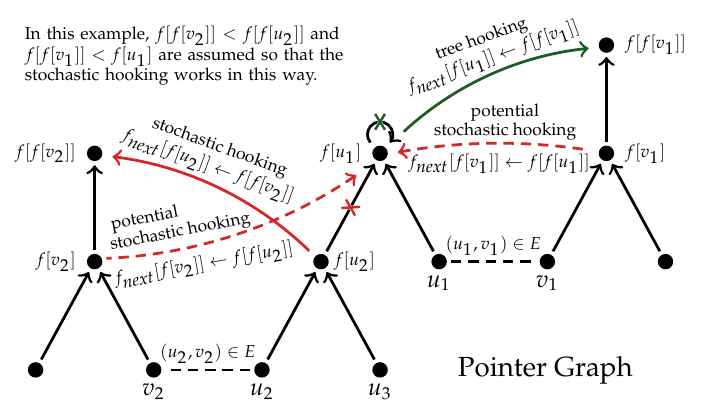}
\caption{The stochastic hooking strategy.
Suppose there are two edges $(u_1,v_1)$ and $(u_2,v_2)$ activating the hooking operation.
The red arrows are the potential modifications to the pointer graph due to our stochastic hooking strategy, which tries to hook a non-root vertex to another vertex.
The solid line successfully modifies $f[u_2]$'s pointer to $f[f[v_2]]$, but the dashed lines do not take effect due to the ordering on the vertices.}
\label{fig:stochastic}
\vspace{-10pt}
\end{figure}

\subsection{Stochastic hooking.}
The original SV algorithm and \autoref{algo:algo-one} always hooked the root of a tree onto another tree (see \autoref{fig:hooking} for an example).
Therefore, the hooking operation in \autoref{algo:algo-one} never breaks a tree into multiple parts and hooks different parts to different trees.
This restriction is enforced by the equality check $f[f[u]]=f[u]$ in line 7 of \autoref{algo:algo-one}, which is only satisfied by roots and their children. 
We observed that this restriction is not necessary for the correctness of the SV algorithm.
Intuitively, we can split a tree into multiple parts and hook them independently because these tree fragments will eventually be merged to a single connected component when the algorithm terminates. 
We call this strategy \emph{stochastic hooking}.



The stochastic hooking strategy can be employed by simply removing the condition $f[f[u]]=f[u]$ from line 7 of \autoref{algo:algo-one}.
Then, any part of a tree is allowed to hook onto another vertex when the other hooking conditions are satisfied. 
It should be noted that after removing the condition $f[f[u]]=f[u]$, it is possible that a tree may hook onto a vertex in the same tree.
This will not affect the correctness though.
In this case, the effect of stochastic hooking is similar to the shortcutting, which hooks a vertex to some other vertex with a smaller identifier.

\autoref{fig:stochastic} shows an example of stochastic hooking by the solid red arrow from $f[u_2]$ to $f[f[v_2]]$.
In the original algorithm, $u_2$ does not modify its non-root parent $f[u_2]$'s pointer, but stochastic hooking changes $f[u_2]$'s pointer to one of $u$'s neighbor's grandparent $f[f[v_2]]$.
Suppose $f[u_1]$ points to $f[f[v_1]]$ after the tree hooking, we can see that $f[u_1]$ and $f[u_2]$ might be no longer in the same connected component (assuming $f[f[v_1]]$ and $f[f[v_2]]$ are in different trees).
Possible splitting of trees is a policy that differs from the conventional SV algorithm, but it gives a non-root vertex an opportunity to be hooked.
In \autoref{fig:stochastic}, $f[u_2]$'s new parent $f[f[v_2]]$ is smaller than $f[u_1]$, which can expedite the convergence.

\autoref{algo:algo-two} presents the high-level description of \Name{} using the new hooking strategies.
Here, $\xleftarrow{\min}$ denotes a compare-and-assign operation that updates an entry of $f_{\mathit{next}}$ only when the right hand side is smaller.
The stochastic hooking is shown in line 6--7, and the shortcutting operation in line 12--13 is also affected by the removal of the predicate $f[f[u]]=f[u]$.

\begin{figure}
\centering
\includegraphics[width=0.48\textwidth]{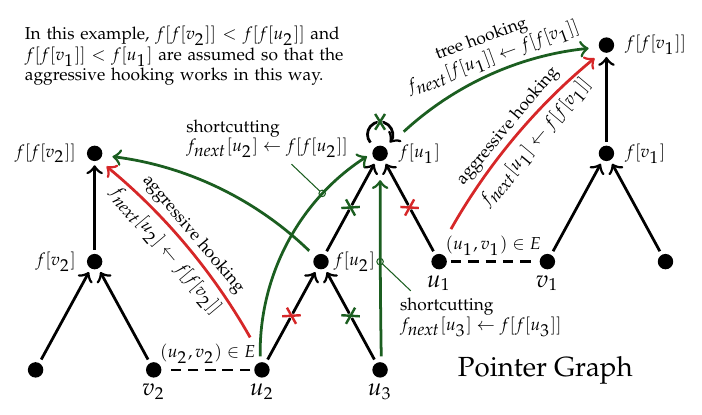}
\caption{The aggressive hooking strategy.
Suppose there are two edges $(u_1,v_1)$ and $(u_2,v_2)$ activating the hooking operation.
The green arrows represent the hooking strategies introduced so far, and the red arrows represent our aggressive hooking strategy where a vertex may point at one of its neighbor's grandparent.
Some vertices may have multiple arrows (like $u_2$), and which vertex to hook onto is decided by the ordering on the vertices.}
\label{fig:aggressive}
\vspace{-10pt}
\end{figure}

\subsection{Aggressive hooking.}
Next, we give a vertex $u$ another chance to hook itself onto another tree whenever possible.
This strategy is called \emph{aggressive hooking}, performed by $f_{\mathit{next}}[u]\xleftarrow{\min} f[f[v]]$ for all $(u,v)\in E$.
\autoref{fig:aggressive} gives an example of aggressive hooking by the red arrow for $u_1$ and $u_2$.
Here, $u_1$'s pointer will not be modified by any hooking operation introduced so far.
Then, the aggressive hooking makes $u_1$ point to its newest grandparents $f[f[v_1]]$, as if an additional shortcutting is performed.
We should mention that the cost of an additional shortcutting $f'_{\mathit{next}}\leftarrow f_{\mathit{next}}[f_{\mathit{next}}]$ is expensive due to the recalculation of grandparents, while the aggressive hooking is essentially a cheap element-wise operation over $f$ by reusing some results in the stochastic hooking.
We will discuss how they are implemented \autoref{sec:sv-graphblas}.

For $u_2$ in \autoref{fig:aggressive}, it only performs the shortcutting operation $f_{\mathit{next}}[u_2]\leftarrow f[f[u_2]]$ in the original algorithm, and now the aggressive hooking performs $f_{\mathit{next}}[u_2]\leftarrow f[f[v_2]]$.
Our implementation let $f[u_2]$ point to the smaller one between $f[f[u_2]]$ and $f[f[v_2]]$, which is expected to give the best convergence for vector $f$.

\begin{algorithm}[t]
\small
\caption{The \Name{} algorithm. \textbf{Input:} $G(V,E)$. \textbf{Output:} The parent vector $f$}
\label{algo:algo-two}
\begin{algorithmic}[1]
\Procedure {\Name{}}{$V,E$}
\For {every vertex $u \in V$}
	\State $f[u], f_{\mathit{next}}[u] \leftarrow u$
\EndFor
\Repeat
	\State\Comment {Step 1: Stochastic hooking}
	\For {every $(u,v) \in E$} \textbf{in parallel}
		\State $f_\mathit{next}[f[u]] \xleftarrow{\min} f[f[v]]$
	\EndFor
	\State\Comment {Step 2: Aggressive hooking}
	\For {every $(u,v) \in E$} \textbf{in parallel}
		\State $f_{\mathit{next}}[u] \xleftarrow{\min} f[f[v]]$
	\EndFor
	\State\Comment {Step 3: Shortcutting}
	\For {every $u \in V$} \textbf{in parallel}
		\State $f_{\mathit{next}}[u] \xleftarrow{\min} f[f[u]]$
	\EndFor
	\State $f\leftarrow f_{\mathit{next}}$
\Until{$f[f]$ remains unchanged}
\EndProcedure
\end{algorithmic}
\end{algorithm}

\subsection{Early termination.}
\label{sec:termination}

The last optimization is a generic one that applies to most variations of the SV algorithm.
SV's termination is based on the stabilization of the parent vector $f$, which means even if $f$ reaches the converged state (where every vertex points to the smallest vertex in its connected component), we need an additional iteration to verify that.
We will see in \autoref{sec:convergence} that for most real-world graphs, \Name{} usually takes 5 to 10 iterations to converge.
Hence, this additional iteration can consume a significant portion of the runtime. 
The removal of the last iteration is possible by detecting the stabilization of the grandparent $f[f]$ instead of $f$.
The following lemma ensures the correctness of this new termination condition.
\begin{lemma}\label{thm1}
After an iteration, if the grandparent $f[f]$ remains unchanged, then the vector $f$ will not be changed afterwards.
\end{lemma}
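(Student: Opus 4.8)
The plan is to show that the hypothesis already forces the forest produced by the iteration to be a disjoint union of stars, one per connected component --- a configuration that is plainly a fixed point of the \Name{} iteration. Write $f$ for the parent vector at the start of the iteration in question and $f'$ for the vector it produces. Within a single iteration every read is taken from $f$ and every write goes to $f_{\mathit{next}}$, so the iteration is a deterministic map $f\mapsto f'$; hence it suffices to prove that one more iteration applied to $f'$ returns $f'$, after which $f$ stays constant forever. In this notation the hypothesis reads $f'[f'[u]] = f[f[u]]$ for all $u$, i.e. $f'\circ f' = f\circ f$ as self-maps of $V$.

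First I would record two facts, each by a routine induction on iterations that inspects the three $\xleftarrow{\min}$ updates of \autoref{algo:algo-two}: (a) at every iteration $f[u]\le u$ for all $u$ (a cell $j$ starts from $f[j]\le j$ and is only lowered), so following pointers strictly decreases value until a root is reached and every periodic point of $f$ is a root; and (b) the shortcutting step gives $f'[u]\le f[f[u]]$ for all $u$. Applying (a) to the vertex $f'[u]$ yields $f'[f'[u]]\le f'[u]$, and chaining with (b) and the hypothesis gives $f[f[u]] = f'[f'[u]] \le f'[u] \le f[f[u]]$, so all three quantities are equal. Thus $f' = f\circ f$ and $f'$ is idempotent ($f'[f'[u]]=f'[u]$): after the iteration the forest consists only of stars.

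Next I would transfer this to $f$ itself. Idempotence of $f'=f\circ f$ says $f^4=f^2$, so $f^2[u]$ is a periodic point of $f$, hence a root by (a); therefore $f^3=f^2$ and $f$ has depth at most two. Now take any edge $(u,v)\in E$. Its stochastic-hooking update was $f_{\mathit{next}}[f[u]]\xleftarrow{\min} f[f[v]]$, so $f'[f[u]]\le f[f[v]]=f^2[v]=f'[v]$; and since $f'=f^2$ and $f^3=f^2$ we get $f'[f[u]]=f^3[u]=f^2[u]=f'[u]$. Hence $f'[u]\le f'[v]$, and the same reasoning with $u$ and $v$ interchanged gives $f'[v]\le f'[u]$, so $f'[u]=f'[v]$: no edge of $G$ joins two distinct stars of $f'$.

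Finally, with $f'$ idempotent, $f'[u]\le u$, and $f'$ constant along every edge, a direct inspection of the next iteration shows that in stochastic hooking, aggressive hooking, and shortcutting the value offered to any cell already equals that cell's current content, so no $\xleftarrow{\min}$ fires and $f'\mapsto f'$; by determinism $f$ never changes again, which is the claim. I expect the middle step to be the main obstacle: everything hinges on reading the two bounds $f'[u]\le f[f[u]]$ (shortcutting) and $f'[f[u]]\le f[f[v]]$ (stochastic hooking) off the algorithm with the \emph{old} $f$ on the right-hand side and then splicing them through $f[u]\le u$, and the delicate point is arguing carefully that idempotence of $f^2$ collapses $f$ to depth two and thence that the two bounds force the edge-equality.
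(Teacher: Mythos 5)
Your proposal is correct and follows essentially the same route as the paper's appendix proof: the sandwich $f[f[u]] = f'[f'[u]] \le f'[u] \le f[f[u]]$ (via $f[u]\le u$ and the shortcutting bound) packages in one step the paper's auxiliary lemmas that every vertex hooks onto its old grandparent and ends up pointing to a root, and your edge-constancy claim $f'[u]=f'[v]$ plays the role of the paper's inequality $f[f[u]]\le f[f[v]]$ on every edge, after which one checks as in the paper that no $\xleftarrow{\min}$ update can fire in the next iteration. The only cosmetic difference is that you obtain the edge inequality from the stochastic-hooking bound together with $f^3=f^2$, whereas the paper reads it off more directly from the aggressive-hooking offer being overridden by the shortcutting offer.
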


\textit{Proof.} See \autoref{sec:thm1}.

In practice, we found that on most practical graphs, \Name{} identifies all the connected components before converged, and the last iteration always performs the shortcutting operation to turn the trees into stars.
In such case, the grandparent vector $f[f]$ converges one iteration earlier than $f$.

\section{Implementation of \Name{} in linear algebra}

In this section, we first give the formal description of \Name{} in GraphBLAS, a standardized set of linear algebra primitives for describing graph algorithms.
We then present its linear algebra distributed-memory implementation in Combinatorial BLAS~\cite{combblas} and discuss two optimization techniques for improving its performance.


\subsection{Implementation in GraphBLAS.}
\label{sec:sv-graphblas}

In GraphBLAS, we assume that the vertices are indexed from $0$ to $|V|-1$, then the vertices and their associated values are stored as GraphBLAS object \texttt{GrB\_Vector}.
The graph's adjacency matrix is stored as a GraphBLAS object \texttt{GrB\_Matrix}.
For completeness, we concisely describe the GraphBLAS functions used in our implementation below, where the formal descriptions of these functions can be found in the API document~\cite{graphblas}.
We use $\emptyset$ to denote \texttt{GrB\_NULL}, which is fed to those ignored input parameters.
\begin{itemize}[leftmargin=*]
\item
  The function $\texttt{GrB\_mxv}(\mathit{y}, \emptyset, \text{accum}, \text{semiring}, \mathbf{A}, \mathit{x}, \emptyset)$ multiplies the matrix $\mathbf{A}$ with the vector $\mathit{x}$ on a semiring and outputs the result to the vector $\mathit{y}$.
  When the accumulator (a binary operation \textit{accum}) is specified, the multiplication result is combined with $y$'s original value instead of overwriting it.
\item
  The function $\texttt{GrB\_extract}(\mathit{y}, \emptyset, \emptyset, \mathit{x}, \text{index}, \text{n}, \emptyset)$ extracts a sub-vector $y$ from the specified positions in an input vector $x$.
  We can regard this operation as $y[i]\leftarrow x[\text{index}[i]]$ for $i\in[0\isep n-1]$ where $n$ is the length of the array $\mathit{index}$ and also the vector $\mathit{y}$.
\item
  The function $\texttt{GrB\_assign}(\mathit{y}, \emptyset, \text{accum}, \mathit{x}, \text{index}, \text{n}, \emptyset)$ assigns the entries from the input vector $x$ to the specified positions of an output vector $y$.
  We can regard it as $y[\text{index}[i]]\leftarrow x[i]$ for $i\in[0\isep n-1]$ where $n$ is the length of the array $\mathit{index}$ and also the vector $x$.
  $\mathit{accum}$ is the same as the one in \texttt{GrB\_mxv}.
\item
  The function $\texttt{GrB\_eWiseMult}(\mathit{y}, \emptyset, \emptyset, \text{binop}, \mathit{x}_1, \mathit{x}_2, \emptyset))$ performs the element-wise (generalized) multiplication on the intersection of elements of two vectors $x_1$ and $x_2$ and outputs the vector $y$.
\item
  The function $\texttt{GrB\_Vector\_extractTuples}(\text{index}, \text{value},$ $\&\text{n}, f)$ extracts the nonzero elements (tuples of index and value) from vector $f$ into two separate arrays $\mathit{index}$ and $\mathit{value}$.
  It returns the element count to $n$.
\end{itemize}
For the rest functions, we have \texttt{GrB\_Vector\_dup} to duplicate a vector, \texttt{GrB\_reduce} to reduce a vector to a scalar value through a user-specified binary operation, and \texttt{GrB\_Matrix\_nrows} to obtain the dimension of a matrix.

\autoref{algo:graphblas2} describes the  \Name{} algorithm in GraphBLAS.
Before every iteration, we calculate the initial grandparent $\mathit{gf}$ for every vertex.
First, we perform the stochastic hooking in line 10--11.
GraphBLAS has no primitive that directly implements the parallel-for on an edge list (line 9 in \autoref{algo:algo-two}), so we have to first aggregate $v$'s grandparent $\mathit{gf}[v]$ to vertex $u$ for every $(u,v)\in E$, obtaining the vector $\mathit{mngf}[u]=\min_{v\in N(u)}\mathit{gf}[v]$.
This can be implemented by a matrix-vector multiplication $\mathit{mngf}=\textbf{A}\cdot\mathit{gf}$ using the (select2nd, min) semiring.
Next, the hooking operation is implemented by the assignment $f[f[u]]\leftarrow \mathit{mngf}[u]$ for every vertex $u$.
This is exactly the \texttt{GrB\_assign} function in line 10 where the indices are the values of vector $f$ extracted in either line 5 before the first iteration or line 16 from the previous iteration.
The accumulator \texttt{GrB\_MIN} prevents the nondeterminism caused by the modification to the same entry of $f$, and the minimum operation gives the best convergence in practice.

Aggressive hooking is then implemented by an element-wise multiplication $f\leftarrow \min(f,\mathit{mngf})$ in line 13.
Although it is another operation in \Name{} that performs the parallel-for on an edge list, it can reuse the vector $\mathit{mngf}$ computed in the previous step, so the aggressive hooking is actually efficient.
Shortcutting is also implemented by an the element-wise multiplication $f\leftarrow \text{min}(f,\mathit{gf})$ in line 15.
Next, we calculate the grandparent vector $\mathit{gf}[u]\leftarrow{f[f[u]]}$.
It is implemented by the \texttt{GrB\_extract} function in line 18 where the indices are the values of $f$ extracted in line 17.


At the end of each iteration, we calculate the number of modified entries in $\mathit{gf}$ in line 20 -- 21 to check whether the algorithm has converged or not.
A copy of $\mathit{gf}$ is stored in the vector $\mathit{dup}$ for determining the termination in the next iteration.

\begin{algorithm}[t]
\small
\caption{The linear algebra \Name{} algorithm. \textbf{Input:} The adjacency matrix $\mathbf{A}$ and the parent vector $f$. \textbf{Output:} The parent vector $f$.}
\label{algo:graphblas2}
\begin{algorithmic}[1]
\Procedure {\Name{}}{$\mathbf{A},f$}
\State \text{GrB\_Matrix\_nrows} $(\&\text{n}, \mathbf{A})$
\State \text{GrB\_Vector\_dup} $(\&\mathit{gf}, \mathit{f})$ \Comment {initial grandparent}
\State \text{GrB\_Vector\_dup} $(\&\mathit{dup}, \mathit{gf})$ \Comment {duplication of $\mathit{gf}$}
\State \text{GrB\_Vector\_dup} $(\&\mathit{mngf}, \mathit{gf})$
\State \text{GrB\_Vector\_extractTuples} $(\text{index}, \text{value}, \&\text{n}, f)$
\State {$\text{Sel2ndMin}\leftarrow \text{a (select2nd, Min) semiring}$}
\Repeat
	\State \Comment {Step 1: Stochastic hooking}
	\State \text{GrB\_mxv} $(\mathit{mngf}, \emptyset, \text{GrB\_MIN}, \text{Sel2ndMin}, \mathbf{A}, \mathit{gf}, \emptyset)$
	\State \text{GrB\_assign} $(f, \emptyset, \text{GrB\_MIN}, \mathit{mngf}, \text{value}, \text{n}, \emptyset)$
	\State \Comment {Step 2: Aggressive hooking}
	\State \text{GrB\_eWiseMult} $(f, \emptyset, \emptyset, \text{GrB\_MIN}, f, \mathit{mngf}, \emptyset)$
	\State \Comment {Step 3: Shortcutting}
	\State \text{GrB\_eWiseMult} $(f, \emptyset, \emptyset, \text{GrB\_MIN}, f, \mathit{gf}, \emptyset)$
	\State \Comment {Step 4: Calculate grandparents}
	\State \text{GrB\_Vector\_extractTuples} $(\text{index}, \text{value}, \&\text{n}, f)$
	\State \text{GrB\_extract} $(\mathit{gf}, \emptyset, \emptyset, f, \text{value}, \text{n}, \emptyset)$
	\State \Comment {Step 5: Check termination}
	\State \text{GrB\_eWiseMult} $(\mathit{diff}, \emptyset, \emptyset, \text{GxB\_ISNE}, \mathit{dup}, \mathit{gf}, \emptyset)$
	\State \text{GrB\_reduce} $(\&\text{sum}, \emptyset, \text{Add}, \mathit{diff}, \emptyset)$
    \State \text{GrB\_assign} $(\text{dup}, \emptyset, \emptyset, \mathit{gp}, \text{GrB\_ALL}, 0, \emptyset))$
\Until {$\text{sum}=0$}
\EndProcedure
\end{algorithmic}
\end{algorithm}

\subsection{Distributed implementation using CombBLAS.}
\label{sec:sv-combblas}

The distributed version of \Name{} is implemented in CombBLAS~\cite{combblas}.
CombBLAS provides all operations needed for \Name{}, but its API differs from the GraphBLAS standard.
GraphBLAS's \textit{collections} (matrices and vectors) are opaque datatypes whose internal representations (sparse or dense) are not exposed to users, but CombBLAS distinguishes them in the user interface.
Then, GraphBLAS's functions often consist of multiple operations (like masking, accumulation and the main operation) as described in \autoref{sec:sv-graphblas}, while in CombBLAS we usually perform a single operation at a time.
Despite these differences, a straightforward implementation of \Name{} on CombBLAS can be obtained by transforming each GraphBLAS function to the semantically equivalent ones in CombBLAS, using dense vectors in all scenarios.

The parallel complexity of the main linear algebraic operations used in \Name{} (the vector variants of \texttt{GrB\_extract} and \texttt{GrB\_assign}, and the \texttt{GrB\_mxv}), as well as the potential optimizations are discussed in the LACC paper~\cite{lacc}.
Due to the similarity of \Name{} and LACC in the algorithm logic, they can be optimized by the similar optimization techniques.
We briefly summarize them below.

\textbf{Broadcasting-based implementation for the extract and assign operations.}
The \emph{extract} and \emph{assign} operations fetch or write data on the specified locations of a vector, which may cause a load balancing issue when there is too much access on a few locations.
In \Name{}, these locations are exactly the set of parent vertices in the pointer graph, and due to the skewed structure of the pointer graph, the root vertices (especially those belonging to a large component) will have extremely high workload.
When using the default \emph{assign} and \emph{extract} implementations in CombBLAS via all-to-all communication, several processes become the bottleneck and slow down the whole operation significantly.
The solution is a manual implementation of these two operations via the detection of the hot spots and broadcasting the entries on those processes.

\textbf{Taking advantage of the sparsity.}
The matrix-vector multiplication $\mathit{mngf}=\mathbf{A}\cdot\mathit{gf}$ is an expensive operation in \Name{} (see our performance profiling in \autoref{sec:sparsity}).
The straightforward implementation naturally chooses the sparse-matrix dense-vector (SpMV) multiplication, since all the vectors in \Name{} are dense vectors.
Alternatively, we can use an incremental implementation by computing $\Delta\mathit{mngf}=\mathbf{A}\cdot(\Delta\mathit{gf})$, where $\Delta\mathit{gf} = \mathit{gf} - \mathit{gf}_{\mathit{prev}}$ containing only the modified entries of $\mathit{gf}$ is stored as a sparse vector, so the multiplication is the sparse-matrix sparse-vector multiplication (SpMSpV)~\cite{azad2017work}.
Depending on the sparsity of $\Delta\mathit{gf}$, SpMSpV could have much lower computation and communication cost than SpMV.
We use a threshold on the portion of modified entries of $\mathit{gf}$ to decide which method to use in each iteration, which effectively reduces the computation time.
\autoref{sec:sparsity} presents a detailed evaluation.

\begin{tabular}{c|c|c|c}
\end{tabular}

\section{Experiments}
\label{sec:evaluation}

\begin{table*}[t]
\centering
\caption{Graph datasets used to evaluate the parallel connected component algorithms.}
\vspace{-5pt}
\footnotesize
\label{tab:datasets}
\begin{tabular}{lrrrl}
\hline
Graph & Vertices & Directed edges & Components & Description \\
\hline
Queen\_4147 & 4.15M & 166.82M & 1 & 3D structural problem~\cite{davis2011university} \\
kmer\_A2a & 170.73M & 180.29M & 5353 & Protein k-mer graphs from GenBank~\cite{davis2011university} \\
archaea & 1.64M & 204.78M & 59794 & archaea protein-similarity network~\cite{hipmcl} \\
kmer\_V1r & 214.01M & 232.71M & 9 & Protein k-mer graphs, from GenBank~\cite{davis2011university} \\
HV15R & 2.02M & 283.07M & 1 & Computational Fluid Dynamics Problem~\cite{davis2011university} \\
uk-2002 & 18.48M & 298.11M & 1990 & 2002 web crawl of .uk domain~\cite{davis2011university} \\
eukarya & 3.24M & 359.74M & 164156 & eukarya protein-similarity network~\cite{hipmcl} \\
uk-2005 & 39.45M & 936.36M & 7727 & 2005 web crawl of .uk domain~\cite{davis2011university} \\
twitter7 & 41.65M & 1.47B & 1 & twitter follower network~\cite{davis2011university} \\
SubDomain & 82.92M & 1.94B & 246969 & 1st-level subdomain graph extracted from Hyperlink~\cite{meusel2014graph} \\
sk-2005 & 50.64M & 1.95B & 45 & 2005 web crawl of .sk domain~\cite{davis2011university} \\
MOLIERE\_2016 & 30.22M & 3.34B & 4457 & automatic biomedical hypothesis generation system~\cite{davis2011university} \\
Metaclust50 & 282.20M & 37.28B & 15982994 & similarities of proteins in Metaclust50~\cite{hipmcl} \\
Hyperlink & 3.27B & 124.90B & 29360027 & hyperlink graph extract from the Common Crawl~\cite{meusel2014graph}\\
\hline
\end{tabular}
\end{table*}

\begin{figure*}[t]
\centering
\includegraphics[width=1.0\textwidth]{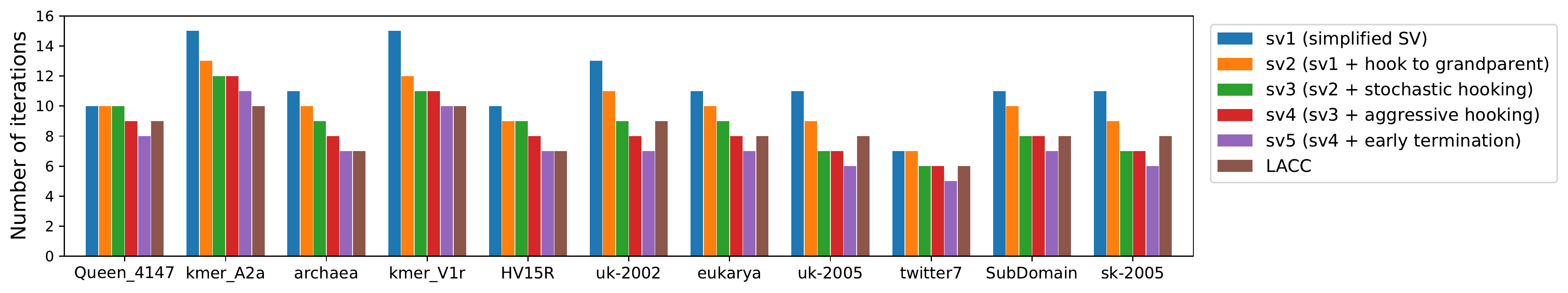}
\vspace{-20pt}
\caption{Number of iterations the simplified SV takes after performing each of the optimizations (sv5 is exactly our \Name{} algorithm), and the number of iterations LACC takes.}
\vspace{-10pt}
\label{fig:sv-iters}
\end{figure*}

\begin{figure}[t]
\centering
\includegraphics[width=0.48\textwidth]{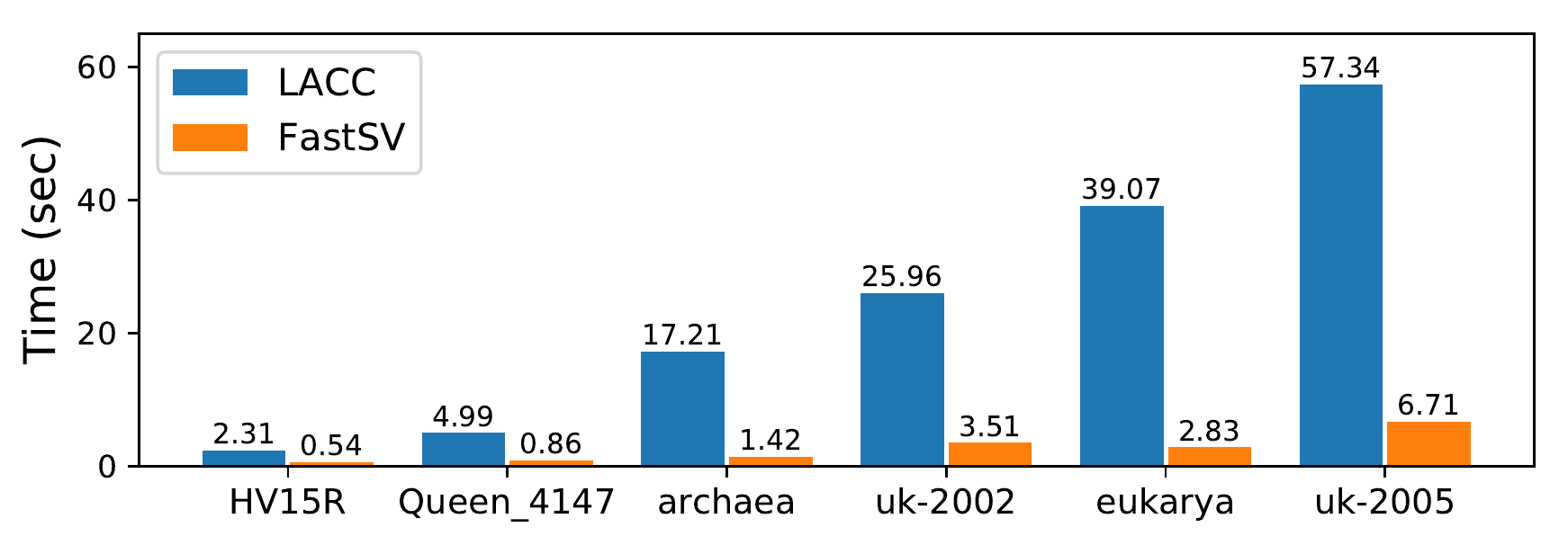}
\vspace{-5pt}
\caption{Performance of the parallel \Name{} and LACC in SuiteSparse:GraphBLAS on six small graphs.}
\vspace{-10pt}
\label{fig:LAGraph}
\end{figure}

\begin{figure*}[t]
\centering
\includegraphics[width=0.95\textwidth]{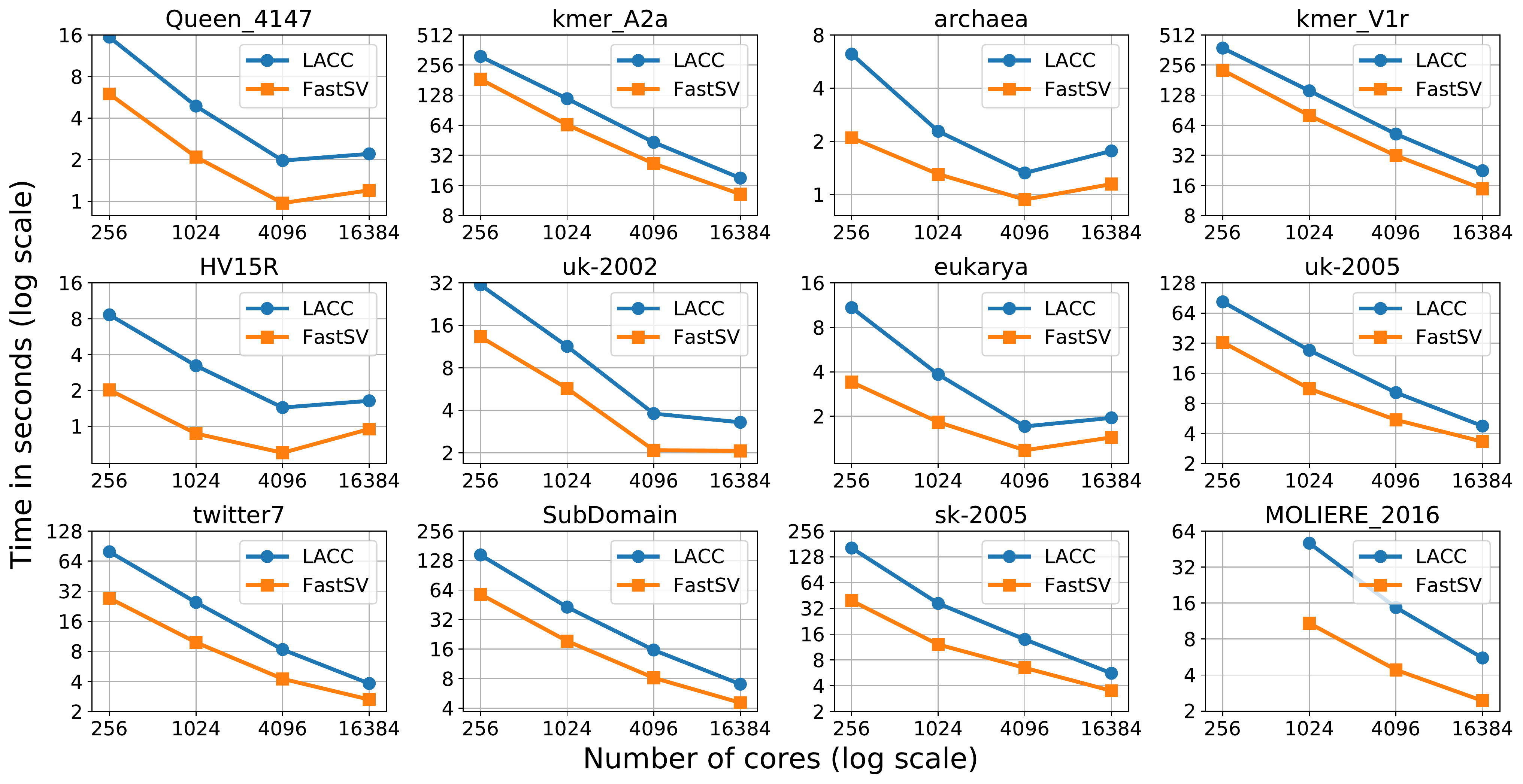}
\vspace{-5pt}
\caption{Strong scaling of distributed-memory \Name{} and LACC using up to $16384$ cores (256 nodes).}
\vspace{-10pt}
\label{fig:scalability}
\end{figure*}

\begin{figure}[t]
\centering
\includegraphics[width=0.48\textwidth]{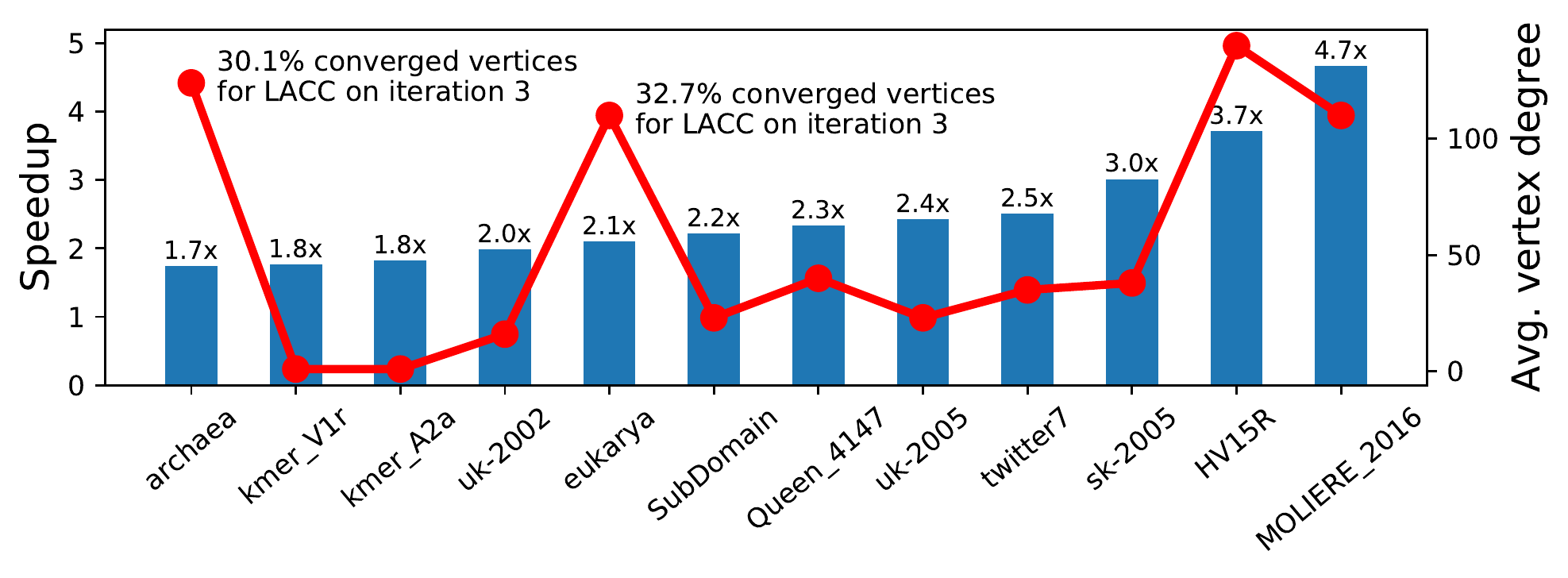}
\vspace{-5pt}
\caption{The speedup of \Name{} over LACC on twelve small datasets using 256 cores (bar chart) and each graph's density in terms of average vertex degree (line chart). A positive correlation between the two metrics can be observed, except for the two outliers \textit{archaea} and \textit{eukarya}.} 
\vspace{-10pt}
\label{fig:density-speedup}
\end{figure}

In this section, we evaluate various aspects of \Name{} showing its fast convergence, shared- and distributed-memory performance, scalability and several other performance characteristics.
We compare \Name{} with LACC~\cite{lacc} that has demonstrated superior performance over other distributed-memory parallel CC algorithms. 
\autoref{tab:datasets} shows a diverse collection of large graphs used to evaluate CC algorithms. 
To the best of out knowledge, the Hyperlink graph~\cite{meusel2014graph} with 3.27B vertices and  124.90B edges is the largest publicly available graph.

\subsection{Evaluation platform.}
We evaluate the performance of distributed algorithms on NERSC Cori supercomputer.
Each node of Cori has Intel KNL processor with 68 cores and 96GB of memory.
All operations in CombBLAS are parallelized with OpenMP and MPI.
Given $p$ MPI processes, we always used a square $\sqrt{p} \times \sqrt{p}$  process grid.  
In our experiments, we used 16 threads per MPI process. 
The execution pattern of our distributed algorithm follows the bulk synchronous parallel (BSP) model, where all MPI processes perform
local computation followed by synchronized communication rounds. 

We also show the shared-memory performance of \Name{} in the SuiteSparse:GraphBLAS library~\cite{suitesparse}.
These experiments are conducted on Amazon EC2's r5.4xlarge instance (128G memory, 16 threads). 


\subsection{Speed of convergence.}
\label{sec:convergence}

At first, we show how different hooking strategies impact the convergence of SV and \Name{} algorithms. 
We start with the simplified SV algorithm (\autoref{algo:algo-one}) and incrementally add different hooking  strategies as shown in \autoref{fig:sv-iters}.
The rightmost bars report the number of iterations needed by LACC. 


\autoref{fig:sv-iters} shows that the simplified SV without unconditional hooking can take up to $1.57\times$ more iterations than LACC. 
We note that despite needing more iterations, \autoref{algo:algo-one} can run faster than LACC in practice because each iteration of the former is faster than each iteration of the latter.
\autoref{fig:sv-iters} demonstrates that SV converges faster as we incrementally apply advanced hooking strategies.
In fact, every hooking strategy improves the convergence of some graphs, and their combination improves the convergence of all graphs. 
Finally, the early termination discussed in \autoref{sec:termination} always removes an additional iteration needed by other algorithms. 
With all improvements, sv5 which represents \autoref{algo:algo-two}, on average reduces $35.0\%$ iterations (min $20\%$, max $46.2\%$) from \autoref{algo:algo-one}.
Therefore,  \Name{} converges as quickly as, or faster than, LACC.

\subsection{Performance in shared-memory platform using SuiteSparse:GraphBLAS.}

To check the correctness of \autoref{algo:graphblas2}, we implemented it in SuiteSparse:GraphBLAS, a multi-threaded implementation of the GraphBLAS standard.
LACC also has an unoptimized SuiteSparse:GraphBLAS implementation available as part of the LAGraph library~\cite{lagraph}.
We compare the performance of \autoref{fig:LAGraph} and LACC in this setting on an Amazon EC2's r5.4xlarge instance with 16 threads.
\autoref{fig:LAGraph} shows that \Name{} is significantly faster than LACC (avg. $8.66\times$, max $13.81\times$).
Although both algorithms are designed for distributed-memory platforms, we still observe better performance of \Name{}, thanks to its simplicity.

\subsection{Performance in distributed-memory platform using CombBLAS.}

We now evaluate the performance of \Name{} implemented using CombBLAS and compare its performance with LACC on the Cori supercomputer.
Both algorithms are implemented in CombBLAS, so they share quite a lot of common operations and optimization techniques (see \autoref{sec:sv-combblas}), making it a fair comparison between the two algorithms.
Generally, \Name{} operates with simpler computation logic and uses less expensive parallel operations than LACC.
However, depending on the structure of the graph, LACC can detect the already converged connected components on the fly and can potentially use more sparse operations. 
Hence, the structure of the input graph often influences the relative performance of these algorithms.

\autoref{fig:scalability} summarizes the performance of \Name{} and LACC on twelve small datasets.
We observe that both \Name{} and LACC  scale to $4096$ cores on all the graphs, and for the majority of the graphs (8 out of 12), they continue scaling to $16384$ cores.
The four graphs on which they stop scaling are just too small that both algorithms finish within 2 seconds.
\Name{} outperforms LACC on all instances.
On 256 cores, \Name{} is $2.80\times$ faster than LACC on average (min $1.66\times$, max $4.27\times$).
When increasing the number of nodes, the performance gap between \Name{} and LACC shrinks slightly, but \Name{} is still $2.53\times$, $1.97\times$ and $1.61\times$ faster than LACC on average on 1024, 4096 and 16384 cores, respectively.

To see how the performance of \Name{} and LACC are affected by the graph structure, we plot the average degree ($|E|/|V|$) and the speedup of \Name{} over LACC for each graph (using 1024 cores) in \autoref{fig:density-speedup}.
Generally, \Name{} tends to outperform LACC by a significant margin on denser graphs.
This is mainly due to fewer matrix-vector multiplications used in \Name{}, whose parallel complexity is highly related to the density of the graph.
The outliers \textit{archaea} and \textit{eukarya} are graphs with a large number of small connected components: they have more than $30\%$ converged vertices detected early. 
On such graphs, LACC's detection of converged connected components provides it with better opportunities to employ
sparse operations, while such detection is not allowed in \Name{}.
Nevertheless, LACC's sparsity optimization still cannot compensate its high computational cost in each iteration.

\subsection{Performance of \Name{} with bigger graphs.}

\begin{figure}[t]
\centering
\includegraphics[width=0.35\textwidth]{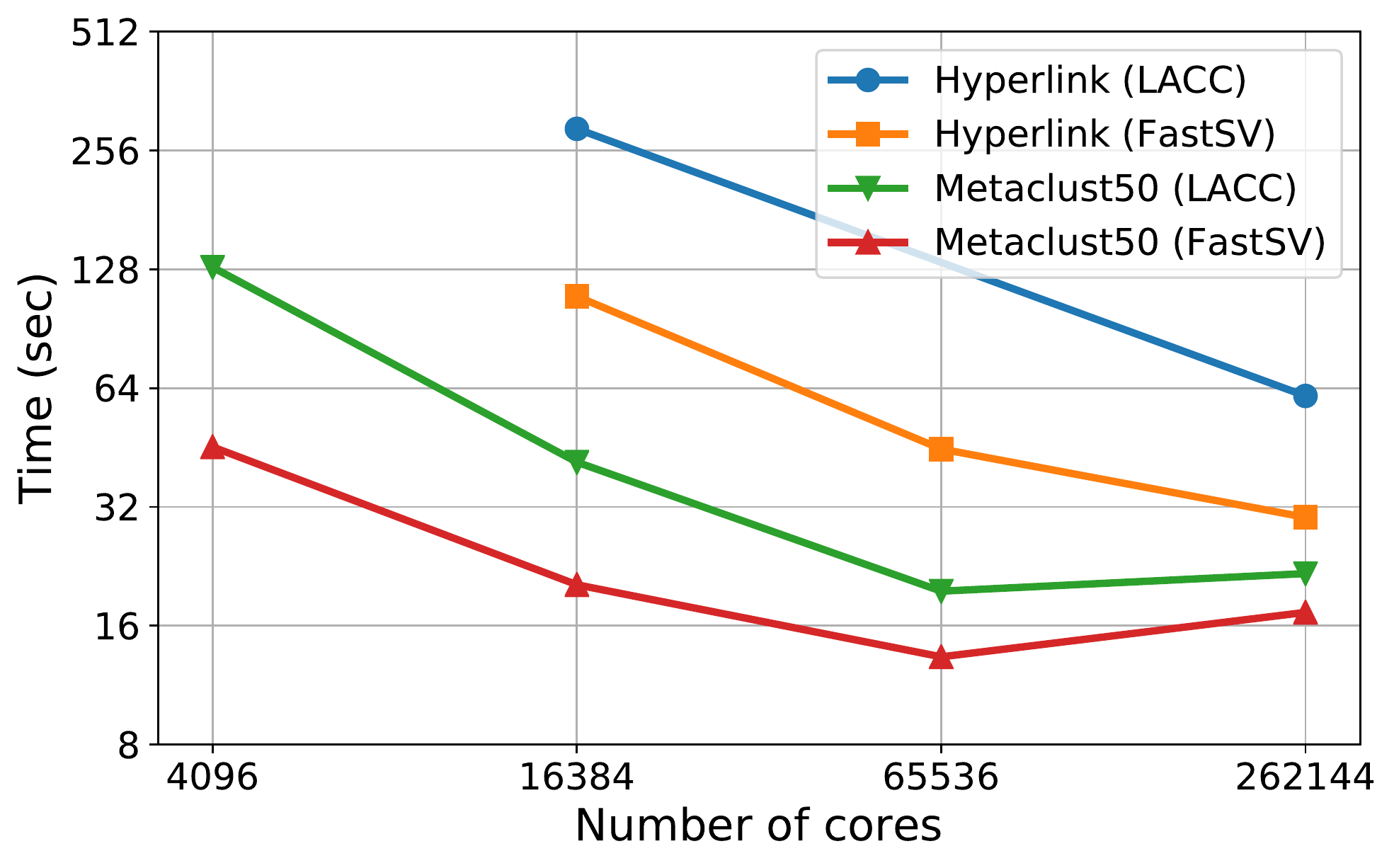}
\vspace{-5pt}
\caption{Performance of \Name{} and LACC with two large graphs on CoriKNL (up to 262144 cores using 4096 nodes).}
\vspace{-10pt}
\label{fig:large}
\end{figure}

We separately analyze the performance of \Name{} and LACC on the two largest graphs in \autoref{tab:datasets}.
Hyperlink is perhaps the largest publicly available graph, making it the largest connectivity problem we can currently solve.
Since each of these two graphs requires more than 1TB memory, it may be impossible to process them on a typical shared-memory server.
\autoref{fig:large} shows the strong scaling of both algorithms and the better performance of \Name{}.
On the smaller graph Metaclust50, both algorithms scale to $65,536$ cores where \Name{} is $1.47\times$ faster than LACC.
On the Hyperlink graph containing 124.9 billion edges, they continue scaling to $262,144$ cores, where \Name{} achieves an $2.03\times$ speedup over the LACC algorithm.



\begin{figure}[t]
\centering
\includegraphics[width=0.44\textwidth]{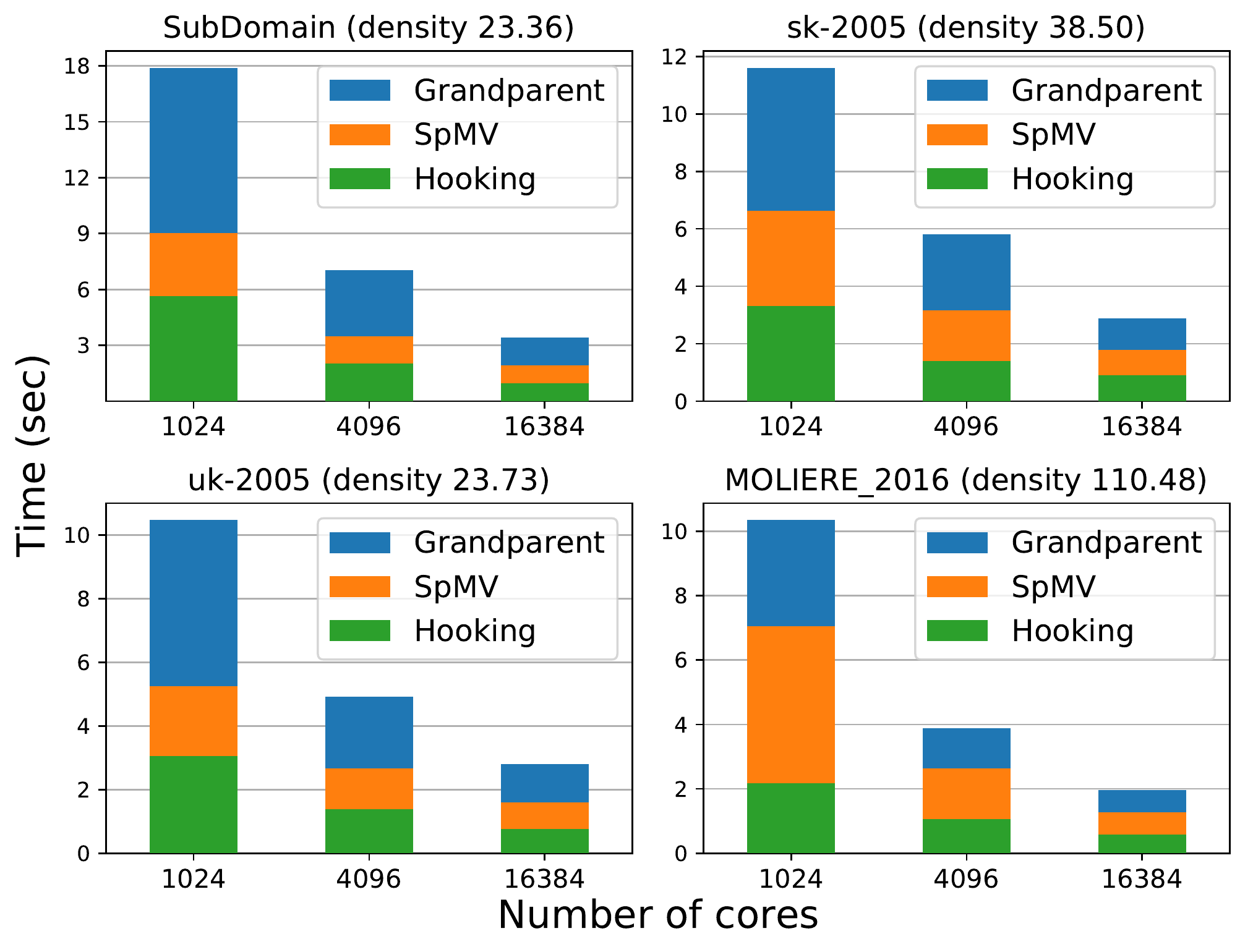}
\vspace{-5pt}
\caption{Performance breakdown of \Name{} on four representative graphs.}
\vspace{-8pt}
\label{fig:sv-parts}
\end{figure}

\subsection{Performance characteristics for operations.}
\label{sec:breakdown}

\autoref{fig:sv-parts} shows the execution time of \Name{} by breaking the runtime into three parts: finding the grandparent, matrix-vector multiplication, the hooking operations.
The time spent on checking the termination is omitted, since it is insignificant relative to other operations.
Each of these operations contributes significantly to the total execution time.
Finding the grandparent and the hooking operations basically reflect the parallel complexity of the \texttt{extract} and \texttt{assign} operations, and the ratio of them is relatively stable for all graphs.
By contrast, the execution time of SpMV varies considerably across different graphs, because SpMV's complexity depends on the density of a graph.

\begin{figure}[t]
\centering
\includegraphics[width=0.48\textwidth]{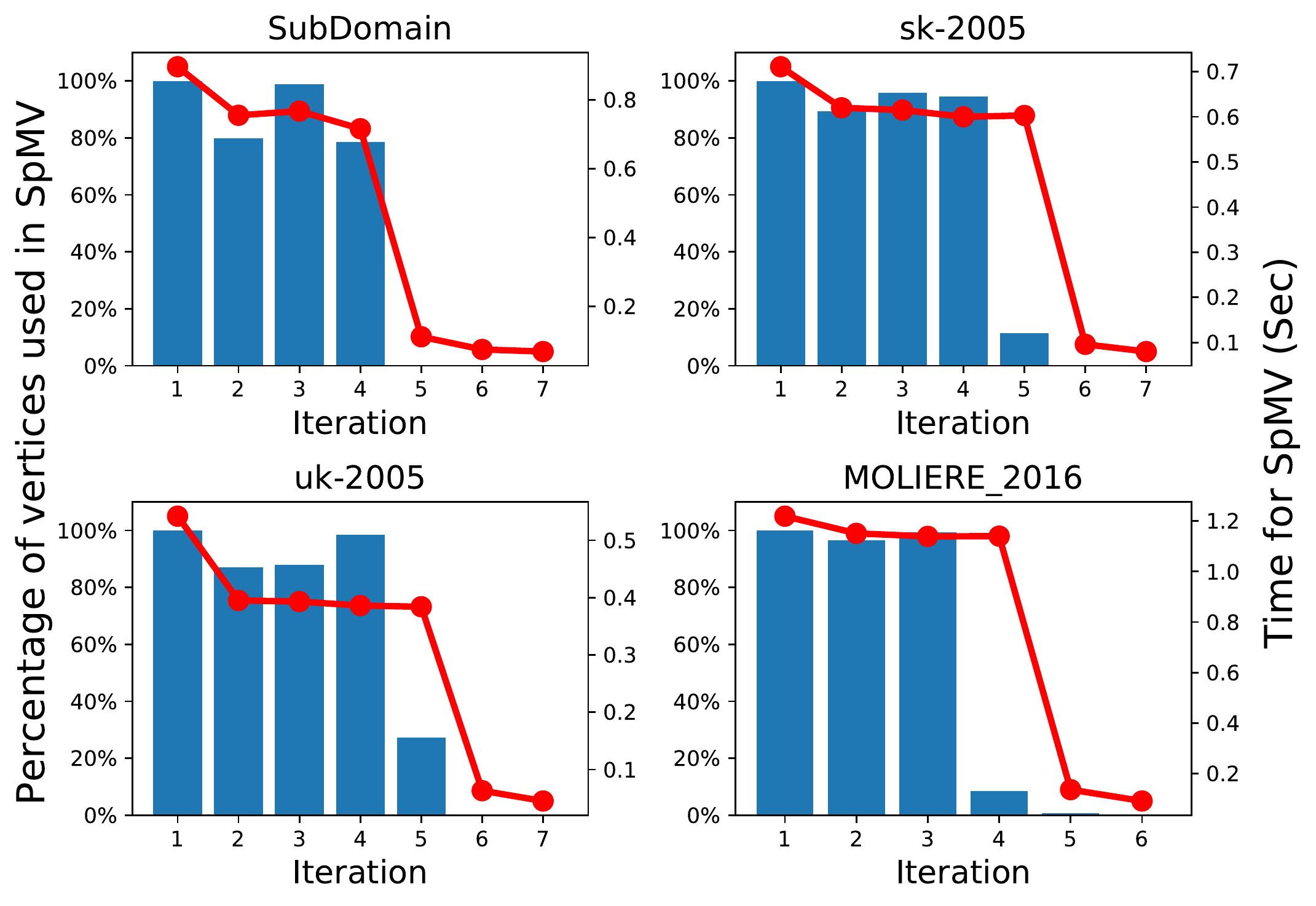}
\vspace{-5pt}
\caption{Percentage of vertices that participate in the SpMV (sparse matrix-vector multiplication) operation for each iteration (bar chart), and the runtime for SpMV (line chart). A vertex participates the SpMV if its grandparent $\mathit{gf}$ is not changed in the previous iteration.}
\vspace{-10pt}
\label{fig:spmv-selected}
\end{figure}

\subsection{Execution time reduced by the sparsity optimization.}
\label{sec:sparsity}

As mentioned in ~\autoref{sec:sv-combblas}, \Name{} dynamically selects SpMV or SpMSpV based on the changes in the grandparent vector $\mathit{gf}$.
This optimization is particularly effective for high-density graphs where SpMV usually dominates the runtime (see  \autoref{fig:sv-parts}).
\autoref{fig:spmv-selected} explains the benefit of sparsity with four representative graphs, where we plot the number of vertices modified in each iteration.
We observe that only a small fraction of vertices participates in the last few iterations where SpMSpV can be used instead of SpMV.
As shown by the red runtime lines in \autoref{fig:spmv-selected}, the use of SpMSpV drastically reduces the runtime of the last few iterations. 

\section{Related work}
\label{sec:related}

Finding the connected components of an undirected graph is a well-studied problem in the PRAM model.
Many of these algorithms such as the Shiloach-Vishkin (SV) algorithm assume the CRCW (concurrent-read and concurrent-write model) model.
The SV algorithm~\cite{sv} takes $O(\log n)$ time on $O(m+n)$ processors.
The Awerbuch-Shiloach (AS) algorithm~\cite{as} is a simplification of SV by using a different termination condition.
Transforming the complete SV or AS to distributed-memory is possible~\cite{svppa,lacc}, but the detection of stagnant trees in SV's \textit{unconditional hooking} step is in fact not suitable for a distributed-memory implementation, which introduces considerable computation and communication cost.
Therefore, we based our distributed-memory \Name{} on a simplified SV algorithm preserving only the essential steps and introduce efficient hooking steps to ensure fast convergence in practice.

There are several distributed-memory connected component algorithms proposed in the literature.
Parallel BFS is a popular method that are implemented and optimized in various systems~\cite{parallel-bfs,graphx,powerlyra,gemini}, but its complexity is bounded by the diameter of the graph, so it is mainly used on small-world networks.
LACC~\cite{lacc} is the state-of-the-art algorithm prior to our work, which guarantees the convergence in $\log(n)$ iterations by transforming the complete AS algorithm into linear algebra operations.
\Name{}'s high performance comes from a much simplified computation logic than LACC.
ParConnect~\cite{parconnect} is another distributed-memory algorithm that adaptively uses parallel BFS and SV and dynamically selects which method to use.
For other software architectures, there are Hash-Min~\cite{hashmin} for MapReduce systems and S-V PPA~\cite{svppa} for vertex-centric message passing systems~\cite{pregel}.
MapReduce algorithms tend to perform poorly in the tightly-couple parallel systems our work targets, compared to the loosely-coupled architectures that are optimized for cloud workloads.
The S-V PPA algorithm, due to the requirement of communicating between non-neighboring vertices, is only supported by several Pregel-like systems~\cite{giraph,pregelplus,pregel-channel}, and these frameworks tend to have limited scalability on multi-core clusters due to the lack of multi-threading support.


\section{Conclusion}
\label{sec:conclusion}

In this paper, we present a new distributed-memory connected component algorithm \Name{} that is scalable to hundreds of thousands processors on modern supercomputers.
\Name{} achieves its efficiency  by first keeping the backbones of the Shiloach-Vishkin algorithm and then employing several novel hooking strategies for fast convergence.
\Name{} attains high performance by employing scalable GraphBLAS operations and optimized communication.
Given the generic nature of our algorithm, it can be easily implemented for any computing platforms such as using GraphBLAST~\cite{yang2019graphblast} for GPUs and can be programmed in most programming languages such as using pygraphblas (https://github.com/michelp/pygraphblas) in Python.

Finding CCs is a fundamental operation in many large-scale applications such as  metagenome assembly and protein clustering.
With the exponential growth of genomic data, these applications will generate graphs with billions of vertices and trillions of edges and will use upcoming exascale computers to solve science problems. 
\Name{} is a step toward such data-driven science as it has the ability to process trillion-edge graphs using millions of cores. 
Overall, \Name{} is generic enough to be used with existing libraries and scalable enough to be integrated with massively-parallel applications.



{\bf Acknoledgement.} Funding for AA was provided by Exascale Computing Project (17-SC-20-SC), a collaborative effort of the U.S. Department of Energy Office of Science and the National Nuclear Security Administration.
This work is also partially supported by the Japan Society for the Promotion of Science (JSPS) Grant-in-Aid for Scientific (S) No. 17H06099.
\newpage
\bibliographystyle{abbrv}
\bibliography{ref}

\begin{thebibliography}{10}

\bibitem{giraph}
{Apache Giraph}.
\newblock http://giraph.apache.org/.

\bibitem{pregelplus}
{Pregel+}.
\newblock http://www.cse.cuhk.edu.hk/pregelplus/.

\bibitem{suitesparse}
{SuiteSparse:GraphBLAS}.
\newblock http://faculty.cse.tamu.edu/davis/GraphBLAS.html.

\bibitem{as}
B.~Awerbuch and Y.~Shiloach.
\newblock New connectivity and {MSF} algorithms for shuffle-exchange network
  and {PRAM}.
\newblock {\em IEEE Transactions on Computers}, (10):1258--1263, 1987.

\bibitem{azad2017work}
A.~Azad and A.~Bulu{\c{c}}.
\newblock A work-efficient parallel sparse matrix-sparse vector multiplication
  algorithm.
\newblock In {\em 2017 IEEE International Parallel and Distributed Processing
  Symposium (IPDPS)}, pages 688--697. IEEE, 2017.

\bibitem{lacc}
A.~Azad and A.~Bulu{\c{c}}.
\newblock {LACC}: A linear-algebraic algorithm for finding connected components
  in distributed memory.
\newblock In {\em Proceedings of the IPDPS}, pages 2--12. IEEE, 2019.

\bibitem{hipmcl}
A.~Azad, G.~A. Pavlopoulos, C.~A. Ouzounis, N.~C. Kyrpides, and A.~Bulu{\c{c}}.
\newblock {HipMCL}: a high-performance parallel implementation of the markov
  clustering algorithm for large-scale networks.
\newblock {\em Nucleic Acids Research}, 46(6):e33--e33, 2018.

\bibitem{combblas}
A.~Bulu{\c{c}} and J.~R. Gilbert.
\newblock The combinatorial {BLAS}: Design, implementation, and applications.
\newblock {\em The International Journal of High Performance Computing
  Applications}, 25(4):496--509, 2011.

\bibitem{parallel-bfs}
A.~Bulu{\c{c}} and K.~Madduri.
\newblock Parallel breadth-first search on distributed memory systems.
\newblock In {\em Proceedings of 2011 International Conference for High
  Performance Computing, Networking, Storage and Analysis}, page~65. ACM, 2011.

\bibitem{graphblas}
A.~Bulu{\c{c}}, T.~Mattson, S.~McMillan, J.~Moreira, and C.~Yang.
\newblock Design of the {GraphBLAS} {API} for {C}.
\newblock In {\em 2017 IEEE International Parallel and Distributed Processing
  Symposium Workshops (IPDPSW)}, pages 643--652. IEEE, 2017.

\bibitem{graphblas-C}
A.~Buluc, T.~Mattson, S.~McMillan, J.~Moreira, and C.~Yang.
\newblock The {GraphBLAS C API} specification.
\newblock {\em GraphBLAS. org, Tech. Rep.}, 2017.

\bibitem{powerlyra}
R.~Chen, J.~Shi, Y.~Chen, and H.~Chen.
\newblock {PowerLyra}: Differentiated graph computation and partitioning on
  skewed graphs.
\newblock In {\em Proceedings of the Tenth European Conference on Computer
  Systems}, page~1. ACM, 2015.

\bibitem{davis2011university}
T.~A. Davis and Y.~Hu.
\newblock The {University of Florida} sparse matrix collection.
\newblock {\em ACM Transactions on Mathematical Software (TOMS)}, 38(1):1,
  2011.

\bibitem{gpi}
K.~Ekanadham, W.~P. Horn, M.~Kumar, J.~Jann, J.~Moreira, P.~Pattnaik,
  M.~Serrano, G.~Tanase, and H.~Yu.
\newblock Graph programming interface ({GPI}): a linear algebra programming
  model for large scale graph computations.
\newblock In {\em Proceedings of the ACM International Conference on Computing
  Frontiers}, pages 72--81. ACM, 2016.

\bibitem{greiner1994comparison}
J.~Greiner.
\newblock A comparison of parallel algorithms for connected components.
\newblock In {\em Proceedings of the sixth annual ACM symposium on Parallel
  algorithms and architectures}, pages 16--25. ACM, 1994.

\bibitem{parconnect}
C.~Jain, P.~Flick, T.~Pan, O.~Green, and S.~Aluru.
\newblock An adaptive parallel algorithm for computing connected components.
\newblock {\em IEEE Transactions on Parallel and Distributed Systems},
  28(9):2428--2439, 2017.

\bibitem{pregel}
G.~Malewicz, M.~H. Austern, A.~J. Bik, J.~C. Dehnert, I.~Horn, N.~Leiser, and
  G.~Czajkowski.
\newblock Pregel: a system for large-scale graph processing.
\newblock In {\em SIGMOD}, pages 135--146. ACM, 2010.

\bibitem{lagraph}
T.~Mattson, T.~A. Davis, M.~Kumar, A.~Buluc, S.~McMillan, J.~Moreira, and
  C.~Yang.
\newblock Lagraph: A community effort to collect graph algorithms built on top
  of the graphblas.
\newblock In {\em IPDPS Workshops}, pages 276--284. IEEE, 2019.

\bibitem{meusel2014graph}
R.~Meusel, S.~Vigna, O.~Lehmberg, and C.~Bizer.
\newblock Graph structure in the web---revisited: a trick of the heavy tail.
\newblock In {\em Proceedings of the 23rd international conference on World
  Wide Web}, pages 427--432. ACM, 2014.

\bibitem{pothen1990computing}
A.~Pothen and C.-J. Fan.
\newblock Computing the block triangular form of a sparse matrix.
\newblock {\em ACM Transactions on Mathematical Software (TOMS)},
  16(4):303--324, 1990.

\bibitem{hashmin}
V.~Rastogi, A.~Machanavajjhala, L.~Chitnis, and A.~D. Sarma.
\newblock Finding connected components in map-reduce in logarithmic rounds.
\newblock In {\em 2013 IEEE 29th International Conference on Data Engineering
  (ICDE)}, pages 50--61. IEEE, 2013.

\bibitem{sv}
Y.~Shiloach and U.~Vishkin.
\newblock An {$O(\log n)$} parallel connectivity algorithm.
\newblock {\em Journal of Algorithms}, 3:57--67, 1982.

\bibitem{graphmat}
N.~Sundaram, N.~Satish, M.~M.~A. Patwary, S.~R. Dulloor, M.~J. Anderson, S.~G.
  Vadlamudi, D.~Das, and P.~Dubey.
\newblock {GraphMat}: High performance graph analytics made productive.
\newblock {\em Proceedings of the VLDB Endowment}, 8(11):1214--1225, 2015.

\bibitem{cc-ipdps18}
M.~Sutton, T.~Ben-Nun, and A.~Barak.
\newblock Optimizing parallel graph connectivity computation via subgraph
  sampling.
\newblock In {\em 2018 IEEE International Parallel and Distributed Processing
  Symposium (IPDPS)}, pages 12--21. IEEE, 2018.

\bibitem{van2000graph}
S.~M. Van~Dongen.
\newblock {\em Graph clustering by flow simulation}.
\newblock PhD thesis, 2000.

\bibitem{graphx}
R.~S. Xin, J.~E. Gonzalez, M.~J. Franklin, and I.~Stoica.
\newblock {GraphX}: A resilient distributed graph system on {Spark}.
\newblock In {\em First International Workshop on Graph Data Management
  Experiences and Systems}, page~2. ACM, 2013.

\bibitem{svppa}
D.~Yan, J.~Cheng, K.~Xing, Y.~Lu, W.~Ng, and Y.~Bu.
\newblock Pregel algorithms for graph connectivity problems with performance
  guarantees.
\newblock {\em Proceedings of the VLDB Endowment}, 7(14):1821--1832, 2014.

\bibitem{yang2019graphblast}
C.~Yang, A.~Buluc, and J.~D. Owens.
\newblock Graphblast: A high-performance linear algebra-based graph framework
  on the gpu.
\newblock {\em arXiv preprint arXiv:1908.01407}, 2019.

\bibitem{yang1989improved}
X.~D. Yang.
\newblock An improved algorithm for labeling connected components in a binary
  image.
\newblock Technical report, Cornell University, 1989.

\bibitem{pregel-channel}
Y.~Zhang and Z.~Hu.
\newblock Composing optimization techniques for vertex-centric graph processing
  via communication channels.
\newblock pages 428--438, 2019.

\bibitem{palgol}
Y.~Zhang, H.-S. Ko, and Z.~Hu.
\newblock Palgol: A high-level {DSL} for vertex-centric graph processing with
  remote data access.
\newblock In {\em Proceedings of the 15th Asian Symposium on Programming
  Languages and Systems}, pages 301--320. Springer, 2017.

\bibitem{gemini}
X.~Zhu, W.~Chen, W.~Zheng, and X.~Ma.
\newblock Gemini: A computation-centric distributed graph processing system.
\newblock In {\em OSDI}, pages 301--316, 2016.

\end{thebibliography}

\appendix

\section{Correctness of the early termination}
\label{sec:thm1}

\cref{thm1} states that, in \Name{} if the grandparent $f[f]$ remains unchanged after an iteration, then the vector $f$ will not be changed afterwards.
The proof makes use of the following lemmas.

\begin{lemma}\label{lemma1}
During the whole algorithm, $f[u]\leq u$ holds for all vertices $u$.
\end{lemma}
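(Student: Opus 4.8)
The plan is to prove Lemma~\ref{lemma1} by induction on the iteration number of \autoref{algo:algo-two}, carrying the invariant ``$f_{\mathit{next}}[u]\le u$ and $f[u]\le u$ for every vertex $u$'' and checking that none of the three steps inside an iteration can violate it. For the base case, initialization sets $f[u]=f_{\mathit{next}}[u]=u$, so the invariant holds. For the inductive step I assume the invariant holds at the start of some iteration; because the previous iteration ended with $f\leftarrow f_{\mathit{next}}$ (or because of the identical initialization), at this moment $f_{\mathit{next}}$ agrees with $f$ entrywise, so both bounds are available.

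The structural observation I would use repeatedly is twofold: every write in \autoref{algo:algo-two} has the form $f_{\mathit{next}}[\,\cdot\,]\xleftarrow{\min}(\cdot)$, a compare-and-assign that can only \emph{decrease} an entry of $f_{\mathit{next}}$; and every right-hand side reads only from the vector $f$, never from $f_{\mathit{next}}$. Hence throughout one iteration the entries of $f$ (used both as indices, e.g.\ $f[u]$, and as values, e.g.\ $f[f[v]]$) are frozen, and the bound on an entry of $f_{\mathit{next}}$ established after one step is still valid as the ``old-value'' bound when the next step performs its $\xleftarrow{\min}$ on that entry.

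I would then walk through the three steps in order. For stochastic hooking (line~6--7) the write targets index $f[u]$; at the start of the step $f_{\mathit{next}}[f[u]]$ equals $f[f[u]]$, which is $\le f[u]$ by the inductive hypothesis applied at the vertex $f[u]$, and $\xleftarrow{\min}$ only makes it smaller, so $f_{\mathit{next}}[f[u]]\le f[u]$ is preserved no matter what value $f[f[v]]$ is offered; entries that are never written still equal the corresponding frozen $f$ value and satisfy the bound. For aggressive hooking (line~9--10) and shortcutting (line~12--13) the write targets index $u$; after the preceding step we already have $f_{\mathit{next}}[u]\le u$, and $\xleftarrow{\min}$ only decreases it, so the bound at index $u$ is kept (for shortcutting one can also argue directly that $f[f[u]]\le f[u]\le u$). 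Finally $f\leftarrow f_{\mathit{next}}$ copies a vector all of whose entries are $\le$ their index, which reestablishes the invariant at the start of the next iteration.

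I do not expect a genuine obstacle here; the only thing requiring care is the bookkeeping that all three steps mutate the single vector $f_{\mathit{next}}$ sequentially while reading from the frozen $f$, so the bounds must be threaded through the steps in the right order, and that one must invoke the inductive hypothesis at the vertex $f[u]$ (not merely at $u$) to bound $f[f[u]]$. This lemma will then feed into the proof of Lemma~\ref{thm1} in the standard way, exactly as in the analogous argument for the simplified SV algorithm.
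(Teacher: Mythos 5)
Your proof is correct and is essentially the paper's argument: the paper simply observes that $f[u]=u$ initially and that every update is a $\xleftarrow{\min}$ assignment, so entries can only decrease. Your step-by-step induction through the three hooking/shortcutting phases is a more detailed bookkeeping of this same monotonicity observation, not a different route.
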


\begin{proof}
Initially, $f[u]=u$ for all vertex $u$ and the lemma holds trivially.
The the operation $\xleftarrow{\min}$ ensures that $f$ can only decrease, so the lemma always holds.
\end{proof}

\begin{lemma}\label{lemma3}
After an iteration, if the grandparent $f[f]$ remains unchanged, then every vertex hooks onto its grandparent in the previous operation.
\end{lemma}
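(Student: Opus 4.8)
\textbf{Proof proposal for \cref{lemma3}.}

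The plan is to show that, under the hypothesis ``$f[f]$ unchanged after the iteration,'' the only one among the three update operations (stochastic hooking, aggressive hooking, shortcutting) that can actually alter an entry of $f$ is the final shortcutting $f_{\mathit{next}}[u] \xleftarrow{\min} f[f[u]]$; equivalently, that whenever a vertex $u$ changes its pointer during the iteration, the new value must be $f[f[u]]$ (the grandparent computed at the start of the iteration). The strategy is by contradiction: assume some vertex $u$ receives a strictly smaller value $w$ from a hooking step with $w < f[f[u]]$, and derive a change in the grandparent vector, contradicting the hypothesis.

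First I would fix notation: write $f$ for the parent vector at the start of the iteration and $f'$ for $f_{\mathit{next}}$ at the end, so the hypothesis is $f'[f'[u]] = f[f[u]]$ for all $u$. I would use \cref{lemma1} ($f'[v] \le v$ for all $v$) and monotonicity of $\xleftarrow{\min}$ ($f' \le f$ entrywise) freely. Now suppose for contradiction that some vertex $u$ ends the iteration with $f'[u] = w$ where $w$ was written by a hooking step and $w < f[f[u]]$. The key sub-step is to track what happens at index $f[u]$: because $f' \le f$, we have $f'[f[u]] \le f[f[u]]$, and I want to show $f'[f[u]]$ equals $f[f[u]]$ (otherwise the grandparent of any child of $f[u]$ already moved and we are done), which pins down $f[u]$ as unchanged. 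Then $f'[u] = w$ gives a new grandparent $f'[f'[u]] = f'[w] \le w < f[f[u]]$, using \cref{lemma1} on $w$; this strictly decreases $u$'s grandparent relative to $f[f[u]]$, contradicting $f'[f'[u]] = f[f[u]]$. Conversely, if every changed vertex $u$ took value exactly $f[f[u]]$, that is precisely the statement that every vertex hooks onto its (previous) grandparent, which is what \cref{lemma3} asserts.

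The main obstacle I anticipate is the bookkeeping around the stochastic hooking step $f_{\mathit{next}}[f[u]] \xleftarrow{\min} f[f[v]]$, which writes at index $f[u]$ rather than $u$ and can involve a neighbor's grandparent: I must rule out that this leaves $f[u]$ apparently unchanged while some \emph{other} vertex's grandparent silently shifts, and I must handle the case $f[u] = u$ (root) versus $f[u] \ne u$ uniformly. The clean way is to argue at the level of arbitrary indices: for every index $j$, either $f'[j] = f[j]$, or $f'[j] < f[j]$ and then any vertex $i$ with $f[i] = j$ has $f'[f'[i]] \le f'[j] < f[j] = f[f[i]]$ by \cref{lemma1}, contradicting the hypothesis; hence $f' = f$ on the image of $f$. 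Granting that, a vertex $u$ that changes must have changed via a write to index $u$ itself (stochastic hooking at $u$ with $f[u] = u$, aggressive hooking, or shortcutting), and in each case the written value $w$ satisfies $w \in \{f[f[u]]\} \cup \{f[f[v]] : (u,v)\in E\}$; if $w = f[f[v]]$ for some neighbor with $f[f[v]] < f[f[u]]$, then $f'[f'[u]] = f'[w]$, and since $w$ lies in the image of $f$ we get $f'[w] = f[w] = f[f[f[v]]] \le f[f[v]] = w < f[f[u]]$, again contradicting stability of $f[f]$. Therefore $w = f[f[u]]$, i.e.\ every changed vertex hooks onto its grandparent, completing the proof.
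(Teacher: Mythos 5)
Your core argument is correct and is essentially the paper's own proof: any value $w$ written to $f_{\mathit{next}}[u]$ that differs from $f[f[u]]$ must satisfy $w < f[f[u]]$ (since the shortcutting $f_{\mathit{next}}[u] \xleftarrow{\min} f[f[u]]$ is always applied), and then \cref{lemma1} gives the new grandparent $f_{\mathit{next}}[w] \le w < f[f[u]]$, contradicting the stability of $f[f]$. The additional scaffolding (pinning down the entry at index $f[u]$, and the claim that $f_{\mathit{next}}$ agrees with $f$ on the image of $f$, whose justification implicitly treats $f_{\mathit{next}}$ as monotone in its index, e.g.\ inferring $f_{\mathit{next}}[f_{\mathit{next}}[i]] \le f_{\mathit{next}}[j]$ from $f_{\mathit{next}}[i] \le j$) is not actually needed for the contradiction and should be dropped rather than repaired.
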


\begin{proof}
By contradiction.
Suppose $u$ changes its pointer to some $v$ other than $f[f[u]]$, then since it overrides the shortcutting operation $f_{\mathit{next}}[u]\xleftarrow{\min}f[f[u]]$ we know that $v<f[f[u]]$.
By \cref{lemma1}, $u$'s new grandparent $f_{\mathit{next}}[v]\leq v<f[f[u]]$, then the grandparent of $u$ is changed.
\end{proof}

\begin{lemma}\label{lemma4}
After an iteration, if the grandparent $f[f]$ remains unchanged, then every vertex points to a root now.
\end{lemma}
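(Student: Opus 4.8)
The plan is to obtain the conclusion by composing \cref{lemma3} with the hypothesis, so that essentially no new combinatorial work is needed. Fix the iteration under consideration, and write $f$ for the parent vector at its start and $f'$ for the parent vector at its end. The hypothesis that the grandparent $f[f]$ is unchanged across this iteration says precisely that $f'[f'[u]] = f[f[u]]$ for every vertex $u$, so the whole proof reduces to establishing one further identity, namely $f'[u] = f[f[u]]$ for every $u$, and then chaining the two.

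First I would apply \cref{lemma3}: under exactly this hypothesis it already asserts that every vertex hooks onto its grandparent in the final (shortcutting) operation, i.e.\ $f'[u] = f[f[u]]$ for every $u$. It is worth confirming that this identity holds for \emph{all} $u$ (including roots and their children, for which hooking onto the grandparent is vacuous); this is immediate from the argument of \cref{lemma3}, since any strictly smaller surviving value of $f'[u]$ would, by \cref{lemma1}, push $u$'s new grandparent strictly below $f[f[u]]$ and hence change it, contradicting the hypothesis. Granting this, I chain the two identities: for every $u$ we get $f'[f'[u]] = f[f[u]] = f'[u]$, where the first equality is the hypothesis and the second is \cref{lemma3}. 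Hence $f'[u]$ is a fixed point of $f'$, i.e.\ following $u$'s pointer immediately reaches a root; since $u$ is arbitrary, every vertex points to a root after the iteration, which is the claim.

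I do not expect a genuine obstacle here — the substantive content has been absorbed into \cref{lemma1} and \cref{lemma3}, and this lemma is just the short bridge from ``the shortcutting step won'' (\cref{lemma3}) to ``all trees are now stars'', which is what is needed downstream toward \cref{thm1}. The only point that warrants care is stating the hypothesis unambiguously: ``the grandparent $f[f]$ remains unchanged'' is the comparison of the grandparent vector recomputed at the end of the iteration (Step~4 of \autoref{algo:graphblas2}) against the grandparent vector that entered the iteration, equivalently against the one produced by the previous iteration, and \cref{lemma3} must be invoked in the strong form ``$f'[u] = f[f[u]]$ for every $u$'' rather than only for vertices whose pointer moved.
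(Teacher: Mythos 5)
Your proof is correct and follows essentially the same route as the paper: both arguments rest on the strong form of \cref{lemma3} ($f'[u]=f[f[u]]$ for every $u$) combined with the unchanged-grandparent hypothesis. The only difference is cosmetic --- you chain the equalities directly to get $f'[f'[u]]=f[f[u]]=f'[u]$, while the paper phrases the same step as a contradiction, using \cref{lemma1} to show that a non-root new parent would force the grandparent to strictly decrease.
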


\begin{proof}
By contradiction.
Suppose $u$'s new parent $v$ is not a root, then $u$'s new grandparent is $f_{\mathit{next}}[v]<v=f[f[u]]$ (by \cref{lemma1} and \cref{lemma3}), which means $u$'s grandparent has changed.
\end{proof}

Here we prove \cref{thm1}.

\begin{proof}
We show that no hooking operation will be performed if $f[f]$ remains unchanged after an iteration.
The aggressive hooking in the form of $f_{\mathit{next}}[u]\xleftarrow{\min}f[f[v]]$ is overridden by the shortcutting operation $f_{\mathit{next}}[u]\xleftarrow{\min}f[f[u]]$ in the previous iteration (by \cref{lemma3}), meaning that $f[f[u]]\leq f[f[v]]$ for all $(u,v)\in E$.
Since then, $f[f]$ is not changed, so the aggressive hooking will not be performed in the current iteration either.
The stochastic hooking $f_{\mathit{next}}[f[u]]\xleftarrow{\min}f[f[v]]$ will not be performed since for all $(u,v)\in E$ we have $f[f[u]]\leq f[f[v]]$.
Shortcutting will not be performed either since every vertex points to a root now (by \cref{lemma4}).
Then, no hooking operation can be performed, and the vector $f$ remains unchanged afterwards.
\end{proof}

\end{document}